\date{\today}
\begin{document}
\title{Linear Search with Probabilistic Detection and Variable Speeds}

\author{Jared Coleman\inst{1}\orcidID{0000-0003-1227-2962} \and
Oscar Morales-Ponce\inst{2}\orcidID{0000-0002-9645-1257}}

\authorrunning{J. Coleman and O. Morales-Ponce}

\institute{Loyola Marymount University, Los Angeles, CA 90045, USA \email{jared.coleman@lmu.edu} \and
California State University, Long Beach, Long Beach, CA 90840, USA \email{oscar.moralesponce@csulb.edu}}

\maketitle

\begin{abstract}
    We present results on new variants of the linear search (or cow-path) problem that involves an agent searching for a target with unknown position on the infinite line.
    We consider the variant where the agent can move either at speed $1$ or at a slower speed $v \in [0, 1)$.
    When traveling at the slower speed $v$, the agent is guaranteed to detect the target upon passing through its location.
    When traveling at speed $1$, however, the agent, upon passing through the target's location, detects it with probability $p \in [0, 1]$.
    We present algorithms and provide upper bounds for the competitive ratios for three cases separately: when $p=0$, $v=0$, and when $p,v \in (0,1)$.
    We also prove that the provided algorithm for the $p=0$ case is optimal.
    \keywords{Linear Search \and Cow-Path \and Probabilistic Search \and Mobile Agents \and Competitive Analysis \and Online Algorithms.}
\end{abstract}

\section{Introduction}\label{sec:intro}

The linear search (or cow-path) problem involves designing algorithms for an agent tasked with locating a target of unknown position on an infinite line~\cite{search:beck,search:bellman}. In this paper, we study a new variant of the problem in which the agent operates under two constraints: probabilistic detection and variable movement speeds. This model is motivated by scenarios in robotics and distributed search with trade-offs between speed and detection reliability.
Specifically, we consider an agent, henceforth referred to as the \textit{robot}, that can traverse the line at two speeds: a faster speed of $1$ and a slower speed $v \in [0,1)$. While moving at the slower speed, the robot is guaranteed to detect the target upon passing its location. At the faster speed, detection occurs probabilistically, with probability $p \in [0,1]$ upon passing the target. The goal is to design search strategies that minimize the expected \textit{competitive ratio}, defined as the ratio of the robot's total travel time to the minimum possible time required by an omniscient agent.

This generalization introduces a unique interplay between exploration efficiency and detection reliability, requiring novel approaches to balance the trade-offs. The problem formulation is motivated by real-world applications where search efficiency is constrained by imperfect sensing and mobility limitations. One such application arises in autonomous robotic search-and-rescue missions, where drones or ground robots must locate missing persons in uncertain environments. High-speed traversal allows for rapid coverage of large areas, but the probability of detecting a small target may decrease unless slower, more careful searches are employed. 
Another application is in environmental monitoring and hazardous material detection, where mobile sensors must scan an area for pollutants, leaks, or radioactive sources. Fast movement enables broad sweeps, but sensors often have detection latencies or require sufficient exposure time for reliable readings. Similar challenges arise in underwater exploration, where autonomous underwater vehicles (AUVs) must detect submerged objects with sonar systems that may exhibit probabilistic detection characteristics depending on speed and environmental conditions.
In cybersecurity and computer networks, our model also relates to intrusion detection systems (IDS) scanning for malicious activity. A security algorithm may switch between fast, coarse-grained sweeps of a network and slower, more detailed analyses of suspicious areas, reflecting the trade-offs between speed and detection reliability in anomaly detection.
Beyond these practical applications, our study contributes to the broader theoretical understanding of search and exploration in adversarial and uncertain environments. By characterizing competitive strategies for different parameter settings, we provide insights into the fundamental limits of probabilistic detection in search problems. 

The contributions of this paper are:
\begin{enumerate}
    \item We analyze the problem in the extreme case where $p=0$ (no detection at speed $1$), providing an algorithm with tight competitive bounds.
    \item We consider the other extreme case where $v=0$ (no slow speed), providing an algorithm and its competitive ratio.
    \item For the case where $p \in (0,1)$ and $v \in (0,1)$, we propose a hybrid algorithm.
\end{enumerate}

The rest of the paper is organized as follows. In Section~\ref{sec:related}, we review related work on the linear search problem and its variants. In Section~\ref{sec:fast}, we present an algorithm and its competitive ratio for the case where $v=0$ (i.e., the robot must always move at speed $1$). In Section~\ref{sec:slow}, we present an algorithm and its competitive ratio for the case where $p=0$ (i.e., the robot can only detect the target when moving at speed $v$). In Section~\ref{sec:general}, we discuss the general case where $p,v \in (0,1)$ and propose a hybrid algorithm. Finally, we conclude in Section~\ref{sec:conclusion}.
This is the full version of a paper that will appear in the proceedings of the 36th International Workshop on Combinatorial Algorithms (IWOCA 2025).

\section{Related Work}\label{sec:related}

The linear search problem, also known as the cow-path problem, was first studied by Beck~\cite{search:beck} and Bellman~\cite{search:bellman} in the 1960s. The classical formulation involves an agent searching for a stationary target on an infinite line while minimizing the competitive ratio—the worst-case ratio of the agent's travel distance to the minimum distance required by an omniscient searcher. Over the decades, numerous variations have been investigated, including multi-agent group search and evacuation~\cite{group_search_evac}, moving targets~\cite{moving-target}, faulty agents~\cite{faulty_agents}, and search-and-rescue~\cite{search-and-rescue}. A recent survey~\cite{linear_search_knowledge} discusses the impact of prior knowledge on the competitive ratio in linear search.

A few studies have explored variations of the problem where the agent's speed is non-uniform. In~\cite{beachcomber}, a search model is examined in which an agent can switch between a faster movement mode (without detection capability) and a slower mode that enables detection. This model is closely related to our work when $v=1$ and $p=0$, but differs in that it considers multi-agent search on a ray rather than an infinite line. Another study~\cite{multispeed:terrain-dependent} investigates terrain-dependent speeds, where an agent's velocity depends on its location and direction of travel.

The challenge of uncertain detection has been analyzed in multiple settings. The competitive bound for the $v=0$ case was previously established in~\cite{search_games} (Section 8.6.2), although a matching lower bound remains an open problem. The study in~\cite{pfaulty_ray} examines the case of probabilistic detection on a ray, demonstrating that a non-monotonic strategy is optimal. Similarly,~\cite{multimodal} explores a setting in which an agent can detect a target only when operating in the correct one of multiple searching modes.

To the best of our knowledge, our study is the first to consider a search model that simultaneously incorporates variable speeds and probabilistic detection on an infinite line. 

\section{The Fast-Only Case}\label{sec:fast}
In this section, we consider the case where $v=0$ and so the agent must always move at the high speed $1$.
We start by presenting an algorithm with a competitive ratio of $\frac{8}{p} + \frac{p}{2-p}$, Algorithm~\ref{alg:fast}.
\begin{algorithm}
    \caption{Fast Approach}
    \label{alg:fast}
    \begin{algorithmic}[1]
        \State $a \gets \frac{2}{2-p}$, $i \gets 0$ \Comment{Expansion Ratio, Round}
        \While{target not found}
            \State move to $(-1)^{i+1} a^{i}$ and back to $0$ at speed $1$
            \State $i \gets i+1$ \Comment{Increment Round}
        \EndWhile
    \end{algorithmic}
\end{algorithm}

A version of the following theorem was previously proven in~\cite{search_games} (Section 8.6.2). 
\begin{theoremrep}
    Algorithm~\ref{alg:fast} has an expected competitive ratio of $\frac{8}{p} + \frac{p}{2-p}$.
\end{theoremrep}
\begin{appendixproof}
    Observe that the agent executing Algorithm~\ref{alg:fast} essentially runs in rounds, where each round $i$ consists of the agent moving from the origin to $a^i$ and back to the origin.
    Let $T_i = 2 a^i$ denote the time it takes for the agent to complete round $i$.
    Because the agent moves at speed $1$, it detects the target when it passes through its location with probability $p$.
    Suppose, without loss of generality, that the target is located at position $d \in (a^{i}, a^{i+2}]$ such that it is first passed by the agent during round $i+2$ (whether or not it is detected).
    If the target is detected the first time it is passed (in round $i+2$), then the expected competitive ratio is:
    \begin{align*}
        \frac{\sum_{j=0}^{i+1} 2a^j + d}{d} .
    \end{align*}
    If instead it is detected the \textit{second} time it is passed (in round $i+2$ on the way back toward the origin), the expected competitive ratio is:
    \begin{align*}
        \frac{\sum_{j=0}^{i+1} 2a^j + a^{i+2} + (a^{i+2} - d)}{d}
        = \frac{\sum_{j=0}^{i+2} 2a^j - d}{d} .
    \end{align*}
    We can generalize this, then, and write the expected competitive ratio if the target is detected the $(k+1)$-th time it is passed as:
    \begin{align*}
        \frac{\sum_{j=0}^{i+k+1} 2a^j + (-1)^k d}{d} .
    \end{align*}
    Observe the probability that the target is detected in round $i+k+2$ is $p(1-p)^k$.
    The expected competitive ratio can be written
    \begin{align}
        \sum_{k=0}^\infty &\frac{p(1-p)^k \left( \sum_{j=0}^{i+k-1} 2a^j + (-1)^k d \right)}{d} \nonumber \\
        &= \sum_{k=0}^\infty p(1-p)^k \left( \frac{2 (a^{i+k+2}-1)}{d(a-1)}\right) \label{eq:fast:1} \\
        &= \sum_{k=0}^\infty \left[ \frac{ p(1-p)^k 2 (a^{i+k+2})}{d(a-1)} - \frac{p(1-p)^k}{d(a-1)} \right] \label{eq:fast:2}  \\
        &= \frac{2p a^{i+2}}{{d(a-1)}} \sum_{k=0}^\infty (a(1-p))^k - \frac{p}{d(a-1)} \sum_{k=0}^\infty (1-p)^k \label{eq:fast:3} 
    \end{align}
    where~\eqref{eq:fast:1} follows from the geometric series formula and~\eqref{eq:fast:2} and~\eqref{eq:fast:3} follow from simplifying and removing constants from the series.
    Observe that the remaining series above are geometric series and converge (by the geometric series test) if and only if $a(1-p) < 1$ and $1-p < 1$.
    In other words, the competitive ratio is bounded if and only if $a < \frac{1}{1-p}$.
    Evaluating the above series (via the infinite geometric series formula, $\sum_{k=0}^\infty a r^k = \frac{a}{1-r}$) yields the following expected competitive ratio:
    \begin{align*}
        \frac{1}{d (a-1)} \left( \frac{2a^{i+2}p}{1+a(p-1)} + \frac{p d (a-1)}{2-p} - 2 \right) .
    \end{align*}
    Observe the above is decreasing with respect to $d$.
    Then, since $d > a^{i}$, the expected competitive ratio is maximized as $d \to a^{i}$.
    Thus, the expected competitive ratio is:
    \begin{align*}
        \lim_{d \to a^i} &\frac{1}{d (a-1)} \left( \frac{2a^{i+2}p}{1+a(p-1)} + \frac{p d (a-1)}{2-p} - 2 \right) \\
        &= \frac{2a^2p}{(a-1)(1+a(p-1))} - \frac{2}{a^i(a-1)} + \frac{p}{2-p}
    \end{align*}
    The above is increasing with respect to $i$, so the expected competitive ratio is maximized as $i \to \infty$.
    Thus, the expected competitive ratio is at most:
    \begin{align*}
        \lim_{i \to \infty} &\frac{2a^2p}{(a-1)(1+a(p-1))} - \frac{2}{a^i(a-1)} + \frac{p}{2-p} \\
        &= \frac{2a^2p}{(a-1)(1+a(p-1))} + \frac{p}{2-p} 
    \end{align*}
    Substituting $a = \frac{2}{2-p}$ and simplifying yields the result stated in the theorem.~\qed
\end{appendixproof}

Observe that, as expected, when $p=1$, the competitive ratio is $9$ (consistent with the original linear search problem).

\section{The Slow-Only Case}\label{sec:slow}
In this section, we consider the case where $p=0$ and so the agent cannot detect the target when moving at speed $1$.
For this case, we present an algorithm where the agent always moves at speed $v$, except over sections of the line that have already been explored.
The algorithm is presented in Algorithm~\ref{alg:slow}.

\begin{algorithm}
    \caption{Slow Approach}
    \label{alg:slow}
    \begin{algorithmic}[1]
        \State $a \gets 1 + \sqrt{\frac{2v^2}{v(1+v)}}$, $i \gets 0$ \Comment{Expansion Ratio, Round}
        \While{target not found}
            \State move to $(-1)^{i+1} a^{i-2}$ at speed $1$ \Comment{Traverse previously-explored regions quickly}
            \State move to $(-1)^{i+1} a^{i}$ at speed $v$ \Comment{Traverse new regions slowly}
            \State move back to $0$ at speed $1$
            \State $i \gets i+1$ \Comment{Increment Round}
        \EndWhile
    \end{algorithmic}
\end{algorithm}

\begin{theorem}
    Algorithm~\ref{alg:slow} has a competitive ratio of $3 + 2\sqrt{2+\frac{2}{v}}+\frac{2}{v}$.
\end{theorem}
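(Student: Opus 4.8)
The plan is to follow the same round-based template used for Algorithm~\ref{alg:fast}. The robot proceeds in rounds, round $i$ consisting of travelling from the origin to $(-1)^{i+1}a^{i-2}$ at speed $1$, then on to $(-1)^{i+1}a^{i}$ at speed $v$, then back to the origin at speed $1$; a direct computation gives its duration as
\begin{equation*}
    T_i = a^{i-2} + \frac{a^{i} - a^{i-2}}{v} + a^{i} = c\,a^{i}, \qquad c := 1 + a^{-2} + \frac{1-a^{-2}}{v}.
\end{equation*}
The structural fact to check first is that the speed-$1$ sub-traversals only ever re-cover regions already searched slowly on the same side: on the positive side the slow segments are $[a^{-1},a],[a,a^{3}],[a^{3},a^{5}],\dots$ and on the negative side $[a^{-2},1],[1,a^{2}],[a^{2},a^{4}],\dots$, together covering $[a^{-1},\infty)$ and $[a^{-2},\infty)$ respectively. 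Since $a>1$ forces $a^{-1},a^{-2}<1$, under the standard assumption that the target lies at distance $d\ge 1$ every target sits inside some slow segment, and in fact is \emph{first passed} (hence detected, since $p=0$) during the slow phase of the round covering that segment.

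Next I would bound the competitive ratio. Fix a target at distance $d$, and suppose it is first reached during the slow phase of round $m+2$, i.e.\ $d\in(a^{m},a^{m+2}]$. At that instant the robot has completed rounds $0,\dots,m+1$ (total time $c\sum_{j=0}^{m+1}a^{j}=c\frac{a^{m+2}-1}{a-1}$), has traversed the fast segment of round $m+2$ out to $a^{m}$ (time $a^{m}$), and has covered the slow segment from $a^{m}$ to $d$ (time $(d-a^{m})/v$). Taking the omniscient benchmark to be $d$ (travel straight to the target at speed $1$), the competitive ratio for this target is
\begin{equation*}
    R(d)=\frac{1}{d}\left(c\,\frac{a^{m+2}-1}{a-1}+a^{m}+\frac{d-a^{m}}{v}\right).
\end{equation*}
The numerator is affine in $d$ with a positive constant term, so $R$ is decreasing on $(a^{m},a^{m+2}]$ and its supremum there is approached as $d\to a^{m}$, giving $1+\frac{c(a^{m+2}-1)}{a^{m}(a-1)}$; this is increasing in $m$, so the supremum over all $d\ge1$ is $\lim_{m\to\infty}$ of this expression, namely $1+\frac{ca^{2}}{a-1}$.

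Finally I would substitute the chosen expansion ratio. Write $a=1+t$ with $t=\sqrt{2v^{2}/(v(1+v))}=\sqrt{2v/(1+v)}$, so $t^{2}=2v/(1+v)$. Then $ca^{2}=a^{2}+1+\frac{a^{2}-1}{v}$, hence $\frac{ca^{2}}{a-1}=\frac{a^{2}+1}{a-1}+\frac{a+1}{v}=\bigl(t+2+\tfrac2t\bigr)+\bigl(\tfrac{t}{v}+\tfrac2v\bigr)$. The remaining $t$-terms collapse since $t\bigl(1+\tfrac1v\bigr)=t\cdot\tfrac{2}{t^{2}}=\tfrac2t$, leaving $t+\tfrac2t+\tfrac tv=\tfrac4t=2\sqrt{2+\tfrac2v}$, so the competitive ratio equals $3+\tfrac2v+2\sqrt{2+\tfrac2v}$, as claimed. (Incidentally, this $t$ is precisely the minimizer of $1+\frac{ca^{2}}{a-1}$ over $a>1$, which explains the constant in the algorithm.) I expect the only genuinely delicate part to be the structural bookkeeping — verifying that every fast sub-traversal reuses searched territory and that, given $d\ge1$, no target is skipped — after which the argument is the same geometric-series plus one-variable optimization as in the fast-only case.
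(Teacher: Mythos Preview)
Your proof is correct and follows essentially the same route as the paper's: compute the per-round time $T_i$, sum the geometric series up to the detection round, show the ratio is maximized as $d\to a^{m}$ and then as $m\to\infty$, and substitute the chosen $a$. Your treatment is in fact a bit cleaner than the paper's, since you use the uniform formula $T_i=ca^{i}$ for all $i$ (the paper special-cases $T_0,T_1$) and the $t=a-1$ substitution makes the final simplification transparent.
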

\begin{proof}
    Observe that the agent executing Algorithm~\ref{alg:slow} essentially runs in rounds, where each round $i$ consists of the agent moving from the origin to $a^i$ and back to the origin.
    Let $T_i$ denote the time it takes for the agent to complete round $i$.
    Observe that $T_0 = 2$ and $T_1 = 2a$, $T_i$ for every $i \geq 2$ can be written:
    \begin{align*}
        T_i = a^{i-2} + \frac{a^i-a^{i-2}}{v} + a^i
    \end{align*}
    Note that, because the agent moves at speed $v$ over unexplored sections, it is guaranteed to detect the target the first time it passes through its location.
    Let us first consider the case where the target is detected in the first or second round. If the target is detected in the first round, then the competitive ratio is clearly $1$ and the theorem holds.
    If the target is detected in the second round, then the target must be located at position $d \in (1, a^2]$.
    The expected competitive ratio is then:
    \begin{align*}
        \frac{2 + d}{d} = 1 + \frac{1}{d}
    \end{align*}
    which is maximized as $d \to 1$.
    Thus, the competitive ratio is at most $2$ and the theorem holds.

    Now let us consider the case where the target is detected in round $i+2$.
    The target, then, must be located at position $d \in (a^{i}, a^{i+2}]$ and the expected competitive ratio is
    \begin{align*}
        \frac{2 + 2a + \sum_{j=2}^{i+1} T_j + a^i + \frac{d-a^i}{v}}{d} .
    \end{align*}
    which, after substituting and evaluating the geometric series, simplifies to:
    \begin{align*}
        \frac{1}{d} \left( \frac{d-a^i}{v}+a^i+\frac{\left(a^{i+2}-1\right) \left(a^2 (1+v)+v-1\right)}{(a-1) a^2 v} \right) .
    \end{align*}
    Observe the above is decreasing with respect to $d$.
    Then, since $d < a^i$, the competitive ratio is maximized as $d \to a^i$.
    Thus, the competitive ratio is at most:
    \begin{align*}
        \lim_{d \to a^i} &\frac{1}{d} \left( \frac{d-a^i}{v}+a^i+\frac{\left(a^{i+2}-1\right) \left(a^2 (1+v)+v-1\right)}{(a-1) a^2 v} \right) \\
        &= 1 + \frac{(a^{i+2}-1)(a^2(1+v)+v-1)}{a^{i+2} (a-1) v}
    \end{align*}
    Since the above is increasing with respect to $i$, the competitive ratio is maximized as $i \to \infty$.
    Thus, the competitive ratio is at most:
    \begin{align*}
        \lim_{i \to \infty} 1 + \frac{(a^{i+2}-1)(a^2(1+v)+v-1)}{a^{i+2} (a-1) v} &= 1 + \frac{a^2 (1+v) + v - 1}{a^2 (a-1) v}
    \end{align*}
    Substituting $a = 1 + \sqrt{\frac{2v^2}{v(1+v)}}$ yields result stated in the theorem.
    ~\qed
\end{proof}

Again, as expected, when $v=1$, the competitive ratio is $9$ (consistent with the original linear search problem).

\begin{theorem}
    If $p=0$, there exists no algorithm with a competitive ratio better than $3 + 2\sqrt{2+\frac{2}{v}}+\frac{2}{v}$.
\end{theorem}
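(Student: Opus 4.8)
The plan is to match the upper bound with a Gal-style adversary argument that reduces the problem to a functional inequality on sequences. \emph{First}, I would invoke the standard normalization used in linear-search lower bounds (see \cite{search:beck} and the subsequent literature): up to a transformation that does not increase the competitive ratio, any algorithm for the $p=0$ case may be assumed to be a \emph{zig-zag} strategy that alternates sides, returns to the origin between excursions, crosses already-explored portions of the line at speed $1$ and every newly-visited stretch at speed $v$ (when $p=0$, crossing unexplored ground at speed $1$ accomplishes nothing), and is described by its exploration extents $x_0,x_1,x_2,\dots>0$ with $x_i<x_{i+2}$ and $x_i\to\infty$ (a bounded sequence yields an uncompetitive algorithm). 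With the convention $x_{-1}=x_{-2}=0$ and writing $b=\frac{1+v}{v}$, the cost of excursion $i$ is exactly $T_i=x_i+x_{i-2}+\frac{x_i-x_{i-2}}{v}=b\,x_i-(b-2)x_{i-2}$.

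\emph{Second}, the adversary places the target just beyond $x_{n-2}$ on the side that is explored during excursion $n$ (we may assume $x_n>x_{n-2}$, else excursion $n$ is vacuous). That side had been explored only up to $x_{n-2}$, so the target is first passed---and, since the stretch $(x_{n-2},x_n]$ is covered at speed $v$, detected---during excursion $n$, at time $S_n+x_{n-2}+o(1)$ as the target approaches $x_{n-2}$, where $S_n:=\sum_{i=0}^{n-1}T_i$. An omniscient agent reaches it in time $x_{n-2}+o(1)$. A direct computation shows the ratio against a target at distance $d\in(x_{n-2},x_n]$ equals $\frac1v+\frac{S_n-(b-2)x_{n-2}}{d}$, and the coefficient $S_n-(b-2)x_{n-2}=2\sum_{i=0}^{n-3}x_i+2x_{n-2}+b\,x_{n-1}$ is positive, so this ratio is maximized as $d\to x_{n-2}$, where it equals $1+\frac{S_n}{x_{n-2}}$. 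Hence the algorithm's competitive ratio is at least $\limsup_{n}\bigl(1+S_n/x_{n-2}\bigr)$, and since $1+2b+2\sqrt{2b}=3+2\sqrt{2+\frac2v}+\frac2v$, it suffices to prove that
\[
\limsup_{n\to\infty}\frac{S_n}{x_{n-2}}\;\ge\;2b+2\sqrt{2b}
\]
for every sequence of positive reals $(x_i)$---notably, no monotonicity is required.

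\emph{Third}, for the functional inequality write $\sigma_k=\sum_{i=0}^{k}x_i$. Telescoping $T_i=b\,x_i-(b-2)x_{i-2}$ gives $S_n=b\,\sigma_{n-1}-(b-2)\sigma_{n-3}=2\sigma_{n-3}+b\,x_{n-2}+b\,x_{n-1}$, so with $\phi_n:=\sigma_{n-3}/x_{n-2}$ and $\psi_n:=x_{n-1}/x_{n-2}$ the claim becomes $\limsup_n\bigl(\phi_n+\frac b2\psi_n\bigr)\ge\frac b2+\sqrt{2b}$. The crucial ingredient is the exact recursion $\phi_{n+1}=(\phi_n+1)/\psi_n$, immediate from $\sigma_{n-2}=\sigma_{n-3}+x_{n-2}$. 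Suppose, for contradiction, that $\phi_n+\frac b2\psi_n\le L:=\frac b2+\sqrt{2b}-\delta$ for all large $n$, with $0<\delta<\sqrt{2b}$. Then $\psi_n\le\frac2b(L-\phi_n)$, which forces $0\le\phi_n<L$ and $\phi_{n+1}=\frac{\phi_n+1}{\psi_n}\ge\frac{b(\phi_n+1)}{2(L-\phi_n)}$. The numerator of $\frac{b(x+1)}{2(L-x)}-x$ is the quadratic $2x^2+(b-2L)x+b$, whose discriminant equals $4\delta(\delta-2\sqrt{2b})<0$; hence this quadratic attains a positive minimum, so $\frac{b(x+1)}{2(L-x)}-x$ is bounded below by a positive constant $\eta$ on $[0,L)$. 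Consequently $\phi_{n+1}\ge\phi_n+\eta$ for all large $n$, forcing $\phi_n\to\infty$ and contradicting $\phi_n<L$. This proves the inequality; equality is attained by the geometric sequence $x_i=a^i$ with $a=1+\sqrt{2/b}=1+\sqrt{\frac{2v}{1+v}}$, i.e.\ precisely Algorithm~\ref{alg:slow}, so the bound is tight.

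The step I expect to be the main obstacle is the first one: making the normalization rigorous in the two-speed setting. One must verify, in particular, that asymptotically no algorithm gains from crossing unexplored territory at speed $1$, from reversing before reaching the origin, or from performing two consecutive excursions on the same side---each routine for classical linear search, but the speed asymmetry means the underlying exchange arguments need to be re-examined. An alternative that sidesteps these reductions is to run the adversary directly against an arbitrary trajectory, using its farthest-explored-point functions on the two sides and lower-bounding the detection time, at the price of heavier bookkeeping.
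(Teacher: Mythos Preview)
Your sequence-inequality argument in steps two and three is clean and correct: the telescoping identity $S_n=2\sigma_{n-3}+b\,x_{n-2}+b\,x_{n-1}$, the recursion $\phi_{n+1}=(\phi_n+1)/\psi_n$, and the discriminant computation all check out, and the contradiction that $\phi_n$ would grow without bound while staying below $L$ is airtight. Modulo the normalization you flag, this establishes the lower bound.

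The paper, however, takes precisely the alternative route you mention at the end. It works directly with an arbitrary trajectory via the explored-region boundaries $l(t),r(t)$, sets $\beta=\lim_{t\to\infty}\inf_{t'>t}\frac{t'}{|l(t')|+|r(t')|}$, extracts a sequence of times $t_i$ at which the boundary touches the cone $t=\beta|x|$, derives the recurrence $t_i\ge t_{i-1}+\frac{1}{\beta}\bigl(\frac{t_i-t_{i-1}}{v}+t_i+t_{i-1}\bigr)$, solves it in closed form (with computer algebra), bounds the competitive ratio by $1+(1+\beta)\,t_i/t_{i-1}$, and finally minimizes over $\beta$, finding the optimum at $\beta=1+2/v$. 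So the paper \emph{sidesteps} the zig-zag normalization entirely; its price is a heavier recurrence and an outer minimization over $\beta$ handled symbolically. Your approach, by contrast, is fully elementary once the reduction to alternating, origin-returning, slow-on-new-territory strategies is granted---no computer algebra, no auxiliary parameter---but that reduction is exactly where the work hides: in the two-speed model the exchange arguments (no fast traversal of unexplored ground, no same-side consecutive excursions, no early turnarounds) are not all immediate, as you rightly note. Either approach is publishable; they simply trade the normalization lemma against the $\beta$-parametrized recurrence.
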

\begin{proof}
    Let $l(t) \geq 0$ and $r(t) \geq 0$ denote the leftmost and rightmost points such that the entire interval $[l(t), r(t)]$ has been explored (traversed by the agent at the slow speed $v$ at least once) at time $t$.
    We will refer to the interval $[l(t), r(t)]$ as the \textit{explored region}, $l(t)$ as the \textit{left boundary} of the explored region, and $r(t)$ as the \textit{right boundary} of the explored region.

    \begin{figure}
        \centering
        \includegraphics[width=0.85\textwidth]{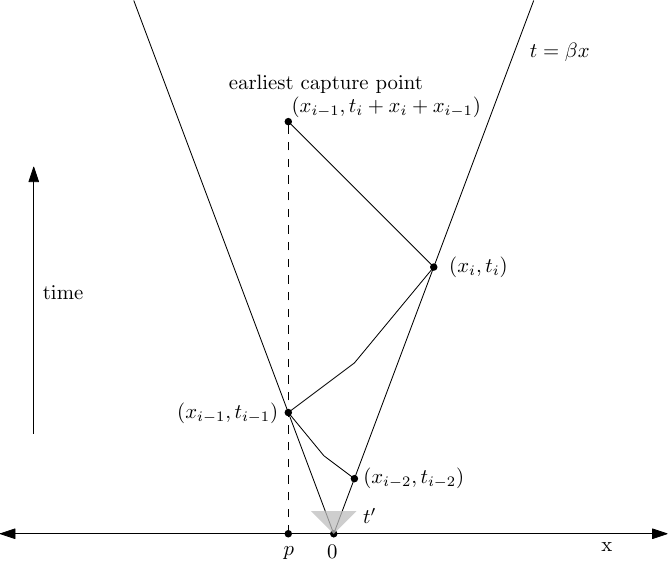}
        \caption{Cone of exploration for the agent.}
        \label{fig:slow_lower_bound}
    \end{figure}

    Then, let $\beta_t = \inf_{t^\prime > t} \frac{t^\prime}{|l(t^\prime)| + |r(t^\prime)|}$. Clearly if $t_1 < t_2$ then $\beta_{t_1} \geq \beta_{t_2}$. Furthermore, $\beta_t \geq 1/v$ for all $t$ since the maximum speed at which the agent can explore a region is $v$.
    Now let $\beta = \lim_{t \to \infty} \beta_t$.
    By the definition of the limit infimum, there must exist a finite time $t$ such that for all $\beta \leq \frac{t_1}{|l(t_1)| + |r(t_1)|}$ for all $t^\prime > t$ and there must exist infinitely many $t^\prime$ such that the (without loss of generality) the right boundary of the explored region $r(t^\prime) = \frac{t^\prime}{\beta}$ (touches the the line $t = \beta x$).

    Consider such a time $t_i$.
    Then, we can define $x_j$ for $j < i$ as the furthest point from the origin the agent could have traveled on the opposite side of the line at time $t_{j+1}$ and $t_j$ as the minimum time the agent could have reached $x_j$ ($t_j = \beta x_j$ if $x_j$ is on the right side of the line and $t_j = -\beta x_j$ if $x_j$ is on the left side of the line).
    See Figure~\ref{fig:slow_lower_bound} for an illustration.

    Observe, then, that $t_i$ can be bounded recursively:
    \begin{align*}
        t_i &\geq t_{i-1} + \frac{|x_i| - |x_{i-1}|}{v} + |x_i| + |x_{i-1}| \\
        t_i &\geq t_{i-1} + \frac{1}{\beta} \left( \frac{t_i - t_{i-1}}{v} + t_i + t_{i-1} \right)
    \end{align*}

    Using trivial base-case bounds of $t_0 \geq 0$ and $t_1 \geq 1$, this recurrence can be solved\footnote{We solved this recurrence using Mathematica. The notebook code is open-source and available at \url{https://github.com/kubishi/code_multi_speed_cow}.} to show that:
    \begin{align*}
        t_i \geq \frac{\beta v - 1}{x} 
        \left(
            \left(
                \frac{v(1+\beta)+x}{2 (\beta  v-1)}
            \right)^i-
            \left(
                \frac{v(1+\beta)-x}{2(\beta  v-1)}
            \right)^i
        \right)
    \end{align*}
    where $x = \sqrt{4+v(6 \beta v + (1 + \beta^2) v - 4 (1+\beta))}$.
    
    Consider the case where the target is located just to the left of $x_{i-1}$ so that it could not have been detected before time $t_i$.
    Then the earliest time the target could have been detected is at time $t_{i} + |x_i| + |x_{i-1}| + \epsilon$ for some arbitrarily small $\epsilon > 0$.
    Then, the competitive ratio can be written:
    \begin{align}
        \text{CR} &= \sup_{\epsilon,i} \frac{t_i + |x_i| + |x_{i-1}| + \epsilon}{|x_{i-1}| + \epsilon} = \sup_{\epsilon,i} \left[ 1 + \frac{t_i + |x_i|}{|x_{i-1}| + \epsilon} \right] \nonumber \\ 
        &= \sup_{i} \left[ 1 + \frac{t_i + |x_i|}{|x_{i-1}|} \right] 
        = \sup_{i} \left[ 1 + (1+\beta) \frac{t_i}{t_{i-1}} \right] \nonumber \\
        &= \sup_{i} \left[ 1 + \frac{(1+\beta) (v(1+\beta)+x)}{2 (\beta  v-1) \left(
            1+\frac{2 x}{v(1+\beta)-x - (v(1 + \beta)+x)^i (v(1 + \beta)-x)^{1-i}}
        \right)
        } \right] \label{eq:lower_bound:1} \\
        &= \sup_{i} \left[ 1 + \frac{(1+\beta) (v(1+\beta)+x)}{2 (\beta  v-1) \left(
            1+\frac{2x}{(v(1+\beta)-x) \left( 1 - \left(\frac{v(1 + \beta)+x}{v(1 + \beta)-x}\right)^i \right)} 
        \right) 
        } \right] \label{eq:lower_bound:2} .
    \end{align}
    Equation~\ref{eq:lower_bound:1} results from substituting $t_i$ and $t_{i-1}$ into the competitive ratio.
    Equation~\ref{eq:lower_bound:2} follows since $v(1 + \beta)-x$ (and therefore $v(1 + \beta)+x$) is always positive.
    Observe that, since $\frac{v(1 + \beta)+x}{v(1 + \beta)-x} > 1$, \eqref{eq:lower_bound:2} is decreasing with respect to $i$.
    Thus, the competitive ratio is maximized as $i \to \infty$.
    Then, the competitive ratio is at most
    \begin{align*}
        1 + \frac{(1+\beta) (v(1+\beta)+x)}{2 (\beta  v-1) \left(
            1+\frac{2x}{(v(1+\beta)-x)} 
        \right) 
        } .
    \end{align*}
    Substituting $x = \sqrt{4+v(6 \beta v + (1 + \beta^2) v - 4 (1+\beta))}$ yields
    \begin{align*}
        1 + \frac{(1+\beta) \left(v(1+\beta) + \sqrt{4+v(6 \beta v + (1 + \beta^2) v - 4 (1+\beta))}\right)}{2 (\beta v-1)} .
    \end{align*}
    Using standard minimization techniques, it can be verified that the minimum is achieved with $\beta = 1 + 2/v$.
    Substituting this value of $\beta$ yields the result stated in the theorem.~\qed
\end{proof}

\section{The General Case}\label{sec:general}
Clearly, if both $v=0$ and $p=0$, the agent will never find the target and if both $v=1$ and $p=1$ the problem is equivalent to the original cow-path problem (and  Algorithms~\ref{alg:fast} and~\ref{alg:slow} are equivalent and optimal).
The only interesting case left, then, is when $v \in (0,1)$ and $p \in (0,1)$.
In this case, the agent must decide when to move at speed $1$ and when to move at speed $v$.

\subsection{Moving Fast or Slow}
The first thing to note about this case is that both Algorithm~\ref{alg:fast} and~\ref{alg:slow} (and their competitive ratios) remain valid for this model.
A first question, then, is to determine for what values of $p$ and $v$ either algorithm is better than the other.
Let $CR_{Fast}$ and $CR_{Slow}$ denote the expected competitive ratios of Algorithm~\ref{alg:fast} and Algorithm~\ref{alg:slow}, respectively.
Then, we have the following result:
\begin{align}
    CR_{Fast} &< CR_{Slow} \nonumber \\
    3 + 2\sqrt{2+\frac{2}{v}}+\frac{2}{v} &< \frac{8}{p} + \frac{p}{2-p} \nonumber \\
    v &< \frac{2p}{8 - p (1-p)^2 - 2 \sqrt{p(8 + p^2 (2-p))}}
    \label{eq:fast_slow}
\end{align}
Thus, a simple algorithm with a competitive ratio of $\max\{CR_{Fast}, CR_{Slow}\}$ is to run Algorithm~\ref{alg:fast} if Inequality~\eqref{eq:fast_slow} holds and Algorithm~\ref{alg:slow} otherwise.
Figure~\ref{fig:regions} shows the values of $p$ and $v$ for which Algorithm~\ref{alg:fast} is better/worse than Algorithm~\ref{alg:slow}.

\subsection{A Hybrid Approach}
In this section, we introduce a hybrid search strategy that balances the trade-offs between the slow and fast search approaches. The algorithm follows the same structure as the slow search algorithm but incorporates an additional "scouting" step. After reaching the turnaround point in each round, instead of immediately returning to the origin, the agent briefly continues forward at the fast speed before heading back.

The intuition behind this approach is that it disrupts the worst-case scenario of the slow search, where the target is placed just beyond the turnaround point. In the standard slow search algorithm, the agent would only reach such a target in a later round, leading to a high competitive ratio. However, by scouting ahead at high speed, the hybrid algorithm has a chance to detect the target much earlier, potentially improving performance in these cases. The challenge is to carefully balance the length of the scouting phase so that it provides meaningful improvement without excessively increasing the total search time for each round.

\begin{algorithm}
    \caption{Hybrid Approach}
    \label{alg:hybrid}
    \begin{algorithmic}[1]
        \State \textbf{inputs}: expansion ratio $a$ and scout-ahead ratio $b$
        \State $i \gets 0$ \Comment{Round}
        \While{target not found}
            \State move to $(-1)^{i+1} a^{i-2}$ at speed $1$ \Comment{Traverse previously-explored regions quickly}
            \State move to $(-1)^{i+1} a^{i}$ at speed $v$ \Comment{Traverse new regions slowly}
            \State move to $(-1)^{i+1} \left( a^{i} + b (a^{i+2} - a^i) \right)$ at speed $1$ \Comment{Scout ahead at the fast speed}
            \State move back to $0$ at speed $1$
            \State $i \gets i+1$ \Comment{Increment Round}
        \EndWhile
    \end{algorithmic}
\end{algorithm}

First, observe that Algorithm~\ref{alg:hybrid} is equivalent to Algorithm~\ref{alg:slow} if $b \gets 0$.
Thus, the expected competitive ratio of Algorithm~\ref{alg:hybrid} is at most the expected competitive ratio of Algorithm~\ref{alg:slow}.
Unfortunately, the expected competitive ratio of Algorithm~\ref{alg:hybrid} is difficult to compute.

\begin{theorem}
    The expected competitive ratio of Algorithm~\ref{alg:hybrid} with expansion ratio $a$ and scout-ahead ratio $b$ is $\max \{ CR_1, CR_2 \}$
    where $CR_1$ is
    \begin{align*}
        \frac{1}{\left(a^2-1\right) b+1} \left[ \frac{\left(a^2-1\right) b}{v}+\frac{a^2 \left(a^2 \left(2 \left(a^2-1\right) b v+v+1\right)+v-1\right)}{(a-1) a^2v}+1 \right] 
    \end{align*}
    and $CR_2$ is
    \begin{align*}
        (a+1) &\left[
            2 a^2 b (p-1)^2+a \left(\frac{1}{a-1}-2 b (p-1) p\right) \right. \\
            &\hspace{1em}\left. +\frac{(p-2) p (v-1)}{a v}+p (2 b+p-2)+\frac{(p-1)^2}{v} 
        \right]  .
    \end{align*}
\end{theorem}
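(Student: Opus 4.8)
The plan is to analyze the per-round behavior of Algorithm~\ref{alg:hybrid} exactly as in the proofs of the preceding theorems — identify the time $T_i$ to complete round $i$, sum the geometric series of completed-round times, and then consider where the target can be located relative to the round-$i$ geometry. The new ingredient is that each round now carves the "new" interval $(a^i, a^{i+2}]$ into two sub-intervals: a \emph{scouted} prefix $(a^i, a^i + b(a^{i+2}-a^i)]$ that is traversed at speed $1$ (so detected only with probability $p$), and the remaining suffix that will be traversed slowly in the following round and hence detected with certainty there. I expect the two expressions $CR_1$ and $CR_2$ to correspond exactly to these two cases: $CR_2$ is the expected competitive ratio when the target lies in the scouted region (so detection is geometric in the number of passes, as in the Fast-Only analysis), and $CR_1$ is the competitive ratio when the target lies just past the scouted region and is therefore only found (with certainty) during the slow sweep of the next round — the analogue of the worst-case placement in the Slow-Only proof, now pushed out to $d \to a^i + b(a^{i+2}-a^i)$ rather than $d \to a^i$.

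First I would write down $T_i$ for $i \ge 2$: the fast traversal of the previously-explored region costs $a^{i-2}$, the slow sweep of the new region costs $(a^i - a^{i-2})/v$, the scouting leg costs $b(a^{i+2}-a^i)$ out and the return costs $a^i + b(a^{i+2}-a^i)$, giving $T_i = a^{i-2} + (a^i-a^{i-2})/v + a^i + 2b(a^{i+2}-a^i)$, and handle $T_0, T_1$ as small base cases. Summing $\sum_{j} T_j$ is a routine geometric-series computation (this is where the $\frac{1}{a-1}$ and $\frac{a^2(\cdots)}{(a-1)a^2 v}$ shapes come from). For $CR_1$: suppose the target sits at $d$ just beyond the scouted prefix of round $i$, i.e. $d = a^i + b(a^{i+2}-a^i)$, so it is missed by every fast pass and is first guaranteed-detected early in the slow sweep of round $i+2$ (well, of the round that slowly covers that location). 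Writing the elapsed time as $\sum_{j<\text{that round}} T_j$ plus the partial slow traversal up to $d$, dividing by $d$, checking the ratio is decreasing in $d$ within its interval (so the sup is at the left endpoint $d \to a^i + b(a^{i+2}-a^i)$), and then checking it is increasing in $i$ and taking $i \to \infty$, should collapse to the stated closed form $CR_1$ with the $((a^2-1)b+1)$ denominator (note $a^i + b(a^{i+2}-a^i) = a^i((a^2-1)b+1)$, which is exactly that factor).

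For $CR_2$: suppose the target lies inside the scouted prefix of round $i$, at distance $d$, so detection occurs on the $(k+1)$-th pass with probability $p(1-p)^k$, exactly the geometric structure of the Fast-Only proof. I would write the expected competitive ratio as a sum over $k$ of $p(1-p)^k \cdot (\text{distance traveled through pass } k)/d$, recognizing the travelled distance splits into the completed-round part ($\sum T_j$), the partial round-$i$ part up to the target, and the $(-1)^k d$ oscillation term; the series converges iff $a(1-p) < 1$, mirroring the earlier constraint. As before the ratio is monotone in $d$, so the sup is taken at an endpoint of the scouted interval, and then monotone in $i$ giving $i \to \infty$; the $(a+1)$ prefactor is the tell-tale sign that, as in the Slow-Only proof, the worst-case target location relates a round's turnaround to the opposite side via the factor $(1+\beta)$-type bookkeeping, here $a+1$. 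Finally, since the target must lie in exactly one of the two regions, the overall expected competitive ratio is the maximum of the two worst cases, $\max\{CR_1, CR_2\}$.

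The main obstacle I anticipate is \textbf{not} the geometric sums but the \emph{monotonicity and endpoint arguments}: verifying that each case's competitive-ratio expression really is monotone in $d$ over the relevant sub-interval and then monotone in $i$, so that the suprema are attained exactly at $d \to a^i$ (resp. $d \to a^i((a^2-1)b+1)$) and $i \to \infty$. The presence of the extra parameter $b$ makes the sign of the derivative in $d$ less obvious than in the $b=0$ case, and one must be careful that the "scouted prefix" and "slow suffix" cases genuinely partition all target placements (including the boundary and the very first rounds) with no third worst case slipping through — e.g. a target in the scouted region that is missed on its first pass and then happens to be re-encountered during a later slow sweep must be accounted for consistently in exactly one of $CR_1$, $CR_2$. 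Pinning down that case analysis cleanly, rather than the algebra, is where the real work lies; I would lean on the same Mathematica-assisted simplification the authors use for the lower bound to discharge the messier derivative sign checks.
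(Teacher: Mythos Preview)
Your outline for $CR_1$ matches the paper's Case~1 essentially verbatim: target in the un-scouted suffix, decreasing in $d$, take $d \to a^i\bigl((a^2-1)b+1\bigr)$, then $i\to\infty$. Good.

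The genuine gap is in your $CR_2$ argument. You model detection in the scouted prefix as an \emph{infinite} geometric process, ``detection on the $(k+1)$-th pass with probability $p(1-p)^k$'', and even import the convergence condition $a(1-p)<1$ from the Fast-Only analysis. That is not what Algorithm~\ref{alg:hybrid} does. A point in the scouted prefix of round $i$ is passed at speed~$1$ exactly twice --- once outbound during the scout, once on the return --- and then, in round $i+2$, that same interval $(a^i, a^i + b(a^{i+2}-a^i)]$ lies inside the slow sweep from $a^i$ to $a^{i+2}$ and is therefore detected with certainty. So the correct expectation has only three terms, with weights $p$, $p(1-p)$, and $(1-p)^2$, not an infinite series; no $a(1-p)<1$ constraint arises, and the closed form for $CR_2$ (note the $(p-1)^2$ and $(p-2)p$ factors in the statement) comes directly from those three weights. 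Your infinite-series computation would produce a different formula and a spurious restriction on $a$.

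You actually brush against this at the very end (``a target in the scouted region that is missed on its first pass and then happens to be re-encountered during a later slow sweep''), but you frame it as a case-partition nuisance rather than the mechanism that governs the whole $CR_2$ calculation. Once you replace the geometric series by the three-outcome expectation and take $d\to a^i$ (the left endpoint) and $i\to\infty$, the monotonicity checks you were worried about are the only remaining work, and they go through as in the paper.
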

\begin{proof}
    Suppose the target is located at position $d \in (a^{i}, a^{i+2}]$.
    Then there are a few cases to consider:
    \begin{enumerate}
        \item[1:] The target is $d \geq a^{i} + b (a^{i+2} - a^i)$. In this case, the target is detected during round $i+2$.
        \item[2:] The target is $d < a^{i} + b (a^{i+2} - a^i)$. In this case, the target \text{may} be detected during round $i$.
        \begin{itemize}
            \item[2.a:] The target is detected during round $i$ while the agent is ``scouting ahead''.
            \item[2.b:] The target is detected during round $i+2$.
        \end{itemize}
    \end{enumerate}
    Let $T_i$ denote the time it takes for the agent to complete round $i$:
    \begin{align*}
        T_i = a^{i-2} + \frac{a^i-a^{i-2}}{v} + 2 b (a^{i+2} - a^i) + a^i
    \end{align*}
    For case 1, the expected competitive ratio is:
    \begin{align}\label{eq:case1}
        \sup_{i, d \in (a^{i} + b (a^{i+2} - a^i), a^{i+2}]}
        \frac{\sum_{j=0}^{i+1} T_j + a^i + \frac{d-a^i}{v}}{d}
    \end{align}
    Observe that~\eqref{eq:case1} is decreasing with respect to $d$.
    Then, since $d < a^i + b (a^{i+2} - a^i)$, the competitive ratio is maximized as $d \to a^i + b (a^{i+2} - a^i)$.
    The competitive ratio, then, can be written:
    \begin{align*}
        \sup_{i} \frac{\frac{\left(a^2-1\right) b}{v}+\frac{\left(a^2-a^{-i}\right) \left(a^2 \left(2 \left(a^2-1\right) b v+v+1\right)+v-1\right)}{(a-1) a^2v}+1}{\left(a^2-1\right) b+1}
    \end{align*}
    Since the above is increasing with respect to $i$, the competitive ratio is maximized as $i \to \infty$.
    Thus, the competitive ratio is at most:
    \begin{align}
        \frac{1}{\left(a^2-1\right) b+1} \left[ \frac{\left(a^2-1\right) b}{v}+\frac{a^2 \left(a^2 \left(2 \left(a^2-1\right) b v+v+1\right)+v-1\right)}{(a-1) a^2v}+1 \right] \label{eq:hybrid:case1}
    \end{align}

    For case 2, the agent detects the target during round $i$ while scouting ahead on the way to $a^{i} + b (a^{i+2} - a^i)$ with probability $p$ and on the way back to the origin with probability $p (1-p)$.
    With probability $(1-p)^2$, the agent does not detect the target while scouting ahead and detects it only during round $i+2$.
    The expected competitive ratio is:
    \begin{align}
        \sup_{i, d \in (a^{i}, a^{i} + b (a^{i+2} - a^i))}
        &\left[ p \left( \sum_{j=0}^{i-1} T_j + a^i + \frac{d-a^i}{v} \right) / d \right. + \nonumber \\
        &\phantom{x}\left. p (1-p) \left( \sum_{j=0}^{i+1} T_j + a^{i+2} + \frac{d-a^{i+2}}{v} \right) / d  \right. \label{eq:case2} \\
        &\phantom{x}\left. (1-p)^2 \left( \sum_{j=0}^{i+1} T_j + a^{i+2} + \frac{d-a^{i+2}}{v} \right) / d  \right] \nonumber
    \end{align}
    Observe again that~\eqref{eq:case2} is decreasing with respect to $d$.
    Then, since $d < a^{i} + b (a^{i+2} - a^i)$, the competitive ratio is maximized as $d \to a^{i}$.
    Then,~\eqref{eq:case2} can be simplified to:
    \begin{align}\label{eq:case2_simplified}
        \frac{1}{a^2 (a-1) v} & \Bigg[
            \frac{1 - v - a^2 (1 + v + 2 (a-1)^2 b v)}{a^i} + \\
            & \quad a (1+a) \bigg(
                (a-1) \big[a (1-p)^2 + (2-p) p\big] + \nonumber \\
            & \quad \quad \big(
                a^2 + 2 (a-1) a^3 b + 2 p 
                - 2 a (a+b+a^2 (2a-3) b) p \nonumber \\
            & \quad \quad \quad + (a-1) (1+a+2 (a-1) a^2 b) p^2 
            \big) v \bigg) \nonumber 
        \Bigg] 
    \end{align}
    Observe now that~\eqref{eq:case2_simplified} is decreasing with respect to $i$ since 
    \begin{align*}
        1 - v - a^2 (1 + v + 2 (a-1)^2 b v) < 1 - v - a^2 < 0
    \end{align*}
    when $a > 1, b \in [0,1], v \in (0,1)$.
    Thus, the competitive ratio is maximized as $i \to \infty$.
    \begin{align}
        (a+1) &\left[
            2 a^2 b (p-1)^2+a \left(\frac{1}{a-1}-2 b (p-1) p\right) \right. \nonumber \\
            &\hspace{1em}\left. +\frac{(p-2) p (v-1)}{a v}+p (2 b+p-2)+\frac{(p-1)^2}{v}
        \right]  \label{eq:hybrid:case2}
    \end{align}
    The expected competitive ratio is then the maximum of~\eqref{eq:hybrid:case1} and~\eqref{eq:hybrid:case2}.
    ~\qed
\end{proof}

Unfortunately, minimizing the expected competitive ratio of the Hybrid Approach analytically with respect to both $a$ and $b$ has proven difficult.
Figure~\ref{fig:AB} shows the numerically optimized values of the expansion ratio $a$ and the scout-ahead ratio $b$ for different combinations of $p$ and $v$. As expected, the value of $b$ increases with $p$, reflecting the increased benefit of scouting ahead when detection at high speed is more reliable. The expansion ratio $a$ also varies with both parameters, reflecting the trade-off between coverage and redundancy in uncertain detection regimes.
\begin{figure}[!h]
    \centering
    
    \begin{subfigure}{0.5\textwidth}
        \centering
        \includegraphics[width=1.1\textwidth]{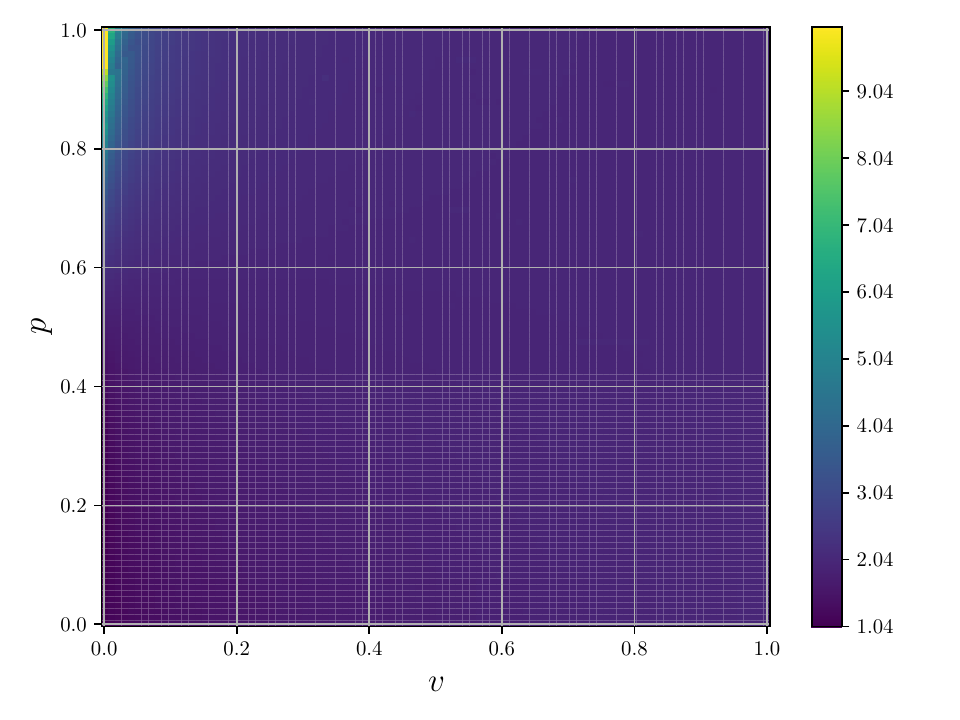}
    \end{subfigure}%
    \begin{subfigure}{0.5\textwidth}
        \centering
        \includegraphics[width=1.1\textwidth]{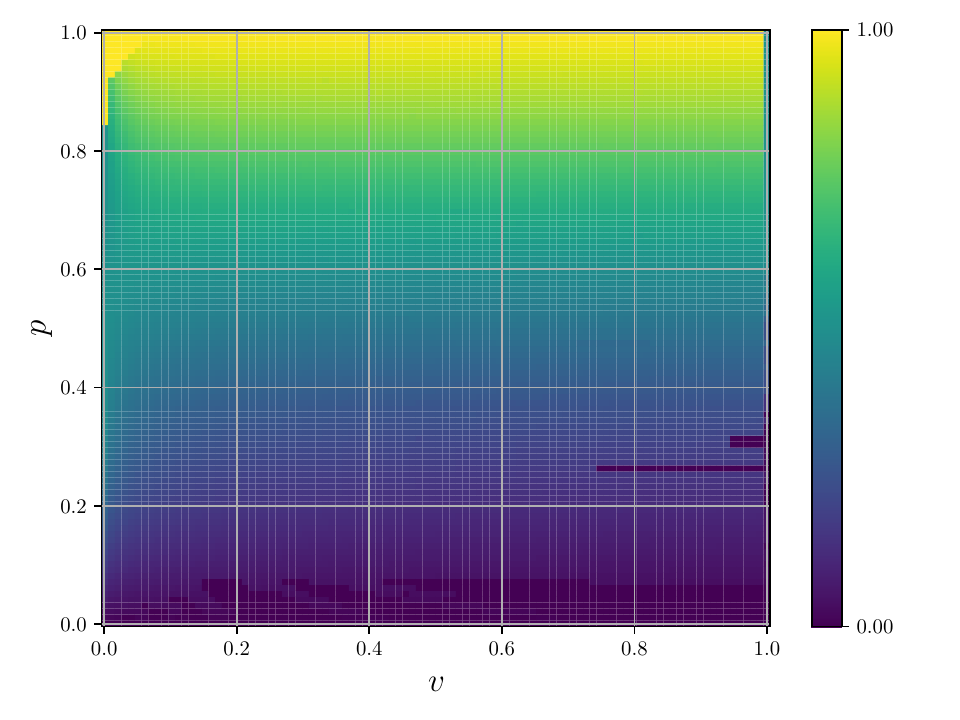}
    \end{subfigure}
    
    \caption{The numerically computed best expansion ratio $a$ (left) and scout-ahead ratio $b$ (right) values for different values of $p$ and $v$.}
    \label{fig:AB}
    
\end{figure}
Figure~\ref{fig:CR} illustrates the expected competitive ratio of the best-performing algorithm between the Fast Approach and the Hybrid Approach\footnote{The code used to generate all figures is open-source and available at \url{https://github.com/kubishi/code_multi_speed_cow}.}. Figure~\ref{fig:CR} shows the improvement of the Hybrid Approach over the Slow Approach, demonstrating its advantage. Finally, Figure~\ref{fig:regions} shows the regions in the $p$-$v$ plane where the Slow Approach and the Hybrid Approach outperform the Fast Approach.
The Hybrid Approach never performs worse than the Slow Approach and expands the parameter space in which it surpasses the Fast Approach. This highlights the effectiveness of a hybrid strategy that balances exploration speed with targeted scouting.
\begin{figure}[!h]
    \centering
    
    \begin{subfigure}{0.5\textwidth}
        \centering
        \includegraphics[width=1.05\textwidth]{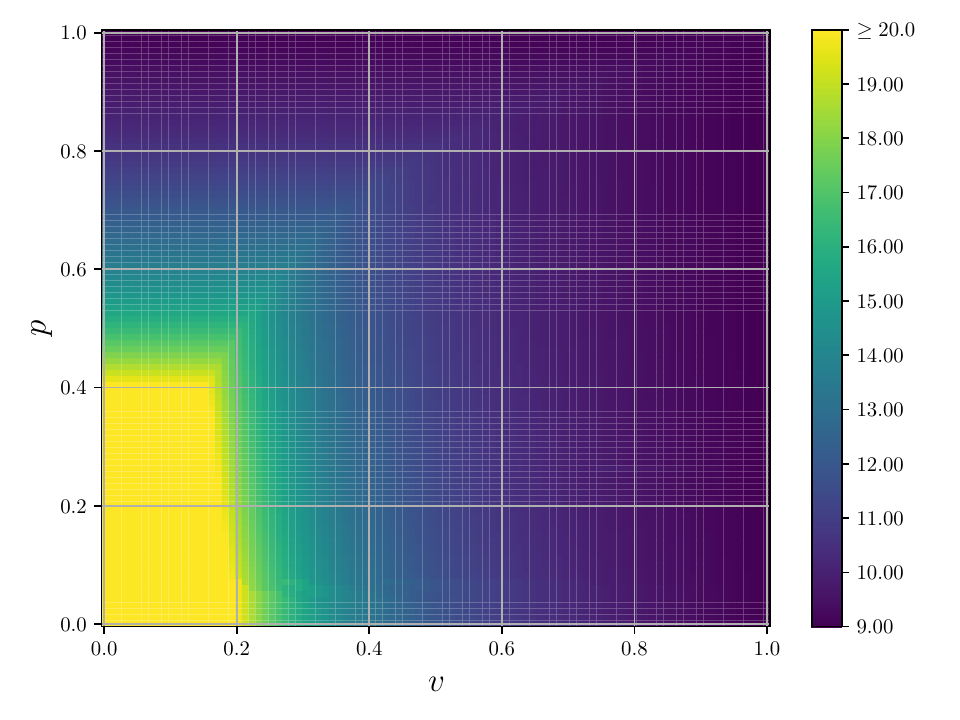}
    \end{subfigure}%
    \begin{subfigure}{0.5\textwidth}
        \centering
        \includegraphics[width=1.05\textwidth]{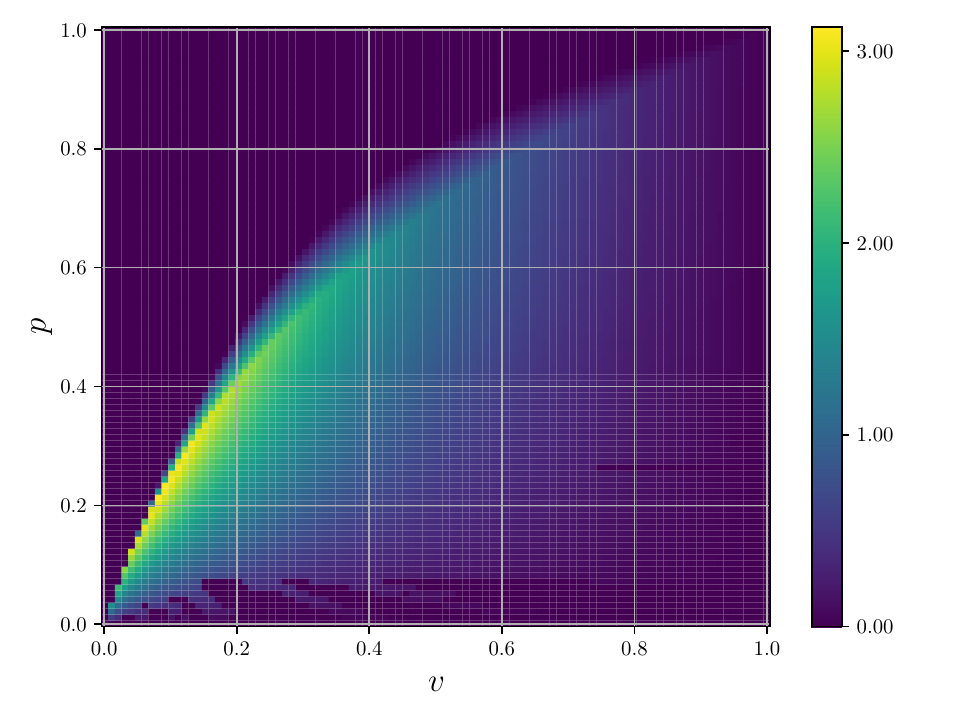}
    \end{subfigure}
    
    \caption{Competitive ratio (left) and difference in expected competitive ratios (right) of Algorithm~\ref{alg:slow} and Algorithm~\ref{alg:hybrid}, showing how much the hybrid approach improves on the slow approach.}
    \label{fig:CR}
    
\end{figure}
\begin{figure}[!h]
    \centering
    \includegraphics[width=\textwidth]{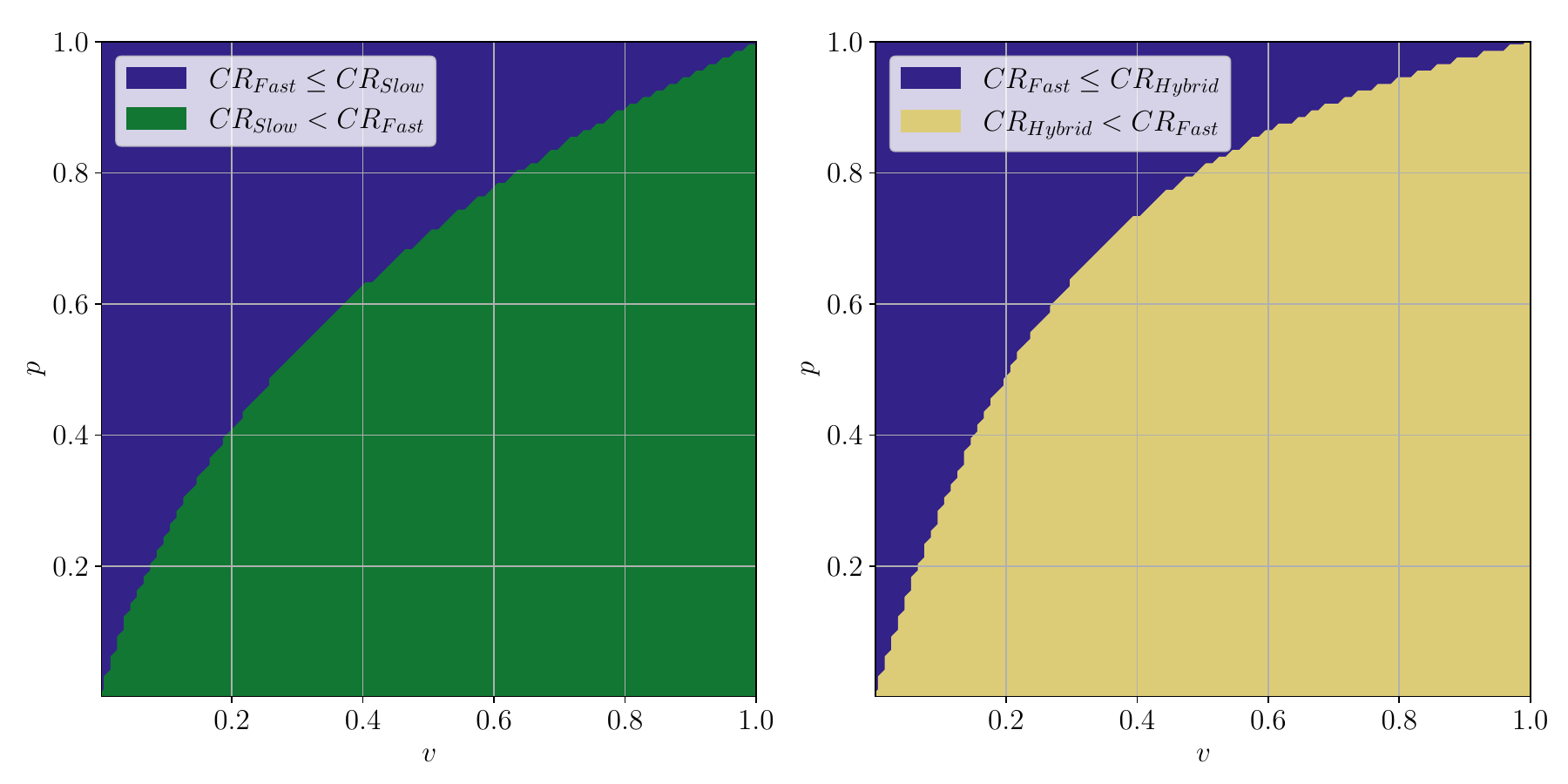}
    \caption{Values for $p$ and $v$ where the Slow and Hybrid Approaches are better/worse than the Fast Approach.}
    \label{fig:regions}
\end{figure}

\section{Conclusion}\label{sec:conclusion}
In this paper, we introduced and analyzed a novel variant of the linear search (cow-path) problem, where a searching agent balances probabilistic detection with variable movement speeds. We presented competitive algorithms for three key cases: (1) when the agent cannot detect the target at high speed ($p=0$), (2) when the agent cannot move at a slow speed ($v=0$), and (3) the general case where both $p$ and $v$ are within $(0,1)$.

For the $p=0$ case, we derived an optimal algorithm with a competitive ratio of $3 + 2\sqrt{2+\frac{2}{v}}+\frac{2}{v}$. For the $v=0$ case, we established an algorithm with a competitive ratio of $\frac{8}{p} + \frac{p}{2-p}$. In the general case, we demonstrated that a hybrid algorithm leveraging both slow and fast movement strategies improves upon existing methods.

Several directions remain for future work. First, tightening the lower bound for the $v=0$ case remains an open question. Additionally, it would be interesting to explore models where the detection probability depends continuously on speed (i.e., $p(v)$), or where speed and sensing capabilities degrade over time. Extensions to higher-dimensional settings or bounded domains (e.g., finite intervals or graphs) may reveal different trade-offs. Finally, considering online variants where $p$ or $v$ is initially unknown and must be learned adaptively could connect this problem to learning-augmented algorithms and online optimization.

\bibliographystyle{splncs04}
\bibliography{refs}

\end{document}